\documentclass[10pt,journal]{IEEEtran}

\usepackage{cite}

\usepackage{graphicx}
\usepackage{epstopdf}
\graphicspath{{./figures}}
\DeclareGraphicsExtensions{.pdf,.eps}

\usepackage{amsmath,amssymb,amsthm,amsfonts,amscd,amsbsy,amsxtra}
\usepackage{mathtools}
\usepackage{siunitx}
\usepackage{algorithm}
\usepackage{algpseudocode}
\usepackage{float}                
\floatstyle{ruled}\restylefloat{algorithm}

\algnewcommand\Input{\State\textbf{input: }}
\algnewcommand\Output{\State\textbf{output: }}
\algnewcommand\Initialize{\State\textbf{initialize: }}
\algnewcommand\CmtLine[1]{\Statex \hspace{\algorithmicindent}{\texttt{/* #1 */}}}

\usepackage[caption=false,font=footnotesize]{subfig}

\newcommand*{\herm}{\mathsf{H}}
\newcommand{\E}{\mathbb{E}}
\newcommand{\C}{\mathbb{C}}
\DeclareMathOperator{\tr}{tr}
\DeclareMathOperator*{\argmax}{arg\,max}

\DeclareMathOperator{\cov}{Cov}
\DeclarePairedDelimiter\abs{\lvert}{\rvert}%
\DeclarePairedDelimiter\norm{\lVert}{\rVert}%

\newtheorem{proposition}{Proposition}

\IEEEoverridecommandlockouts

\begin{document}
\bstctlcite{IEEEexample:BSTcontrol}

\title{OAMP-Aided Joint Channel Estimation and Data Detection for ODDM Systems}

\author{
    Kehan~Huang,~\IEEEmembership{Graduate~Student~Member,~IEEE,}
    Min~Qiu,~\IEEEmembership{Senior~Member,~IEEE,}
    Akram~Shafie,~\IEEEmembership{Member,~IEEE,}
    and~Jinhong~Yuan,~\IEEEmembership{Fellow,~IEEE}

    \thanks{The work of K. Huang, M. Qiu, A. Shafie, and J. Yuan was supported in part by the Australian Research Council (ARC) Discovery Project under Grant DP220103596, in part by the ARC Linkage Project under Grant LP200301482, and in part by the Connectivity Innovation Network, Australia. The work of M. Qiu was also supported by the SJTU-ExploreX Funding under Grant SD6040004/153. \emph{(Corresponding author: Min Qiu.)}}

    \thanks{K. Huang, A. Shafie, and J. Yuan are with the School of Electrical Engineering and Telecommunications, University of New South Wales (UNSW), Sydney, NSW 2052, Australia (email: \{kehan.huang, akram.shafie, j.yuan\}@unsw.edu.au). M. Qiu is with the Global College, Shanghai Jiao Tong University, Shanghai 200240, China (email: min\_qiu@sjtu.edu.cn).}
}

\markboth{Accepted for publication in IEEE Transactions on Vehicular Technology.}{Accepted for publication in IEEE Transactions on Vehicular Technology.}

\maketitle

\begin{abstract}
    In this work, to address the challenge of joint channel estimation and data detection (JED) for orthogonal delay-Doppler (DD) division multiplexing (ODDM) in doubly selective channels, we propose an orthogonal approximate message passing (OAMP)-aided JED (OAMP-JED) receiver. We first formulate a bilinear cross-domain JED model, which can be linearized into separate channel estimation and data detection subproblems. The proposed OAMP-JED receiver alternately executes two OAMP modules for these subproblems, effectively coupled through a variational noise term to account for model uncertainty. Leveraging OAMP's error orthogonality, we derive closed-form scalar-variance updates to enable efficient and principled soft information exchange between the modules, thereby mitigating error propagation during JED. Simulation results show that, for both uncoded and coded ODDM, OAMP-JED achieves a lower bit error rate (BER) than benchmark schemes. Moreover, its BER performance closely approaches that of OAMP with perfect CSI.
\end{abstract}
\begin{IEEEkeywords}
    ODDM, OTFS, OAMP, CSI
\end{IEEEkeywords}

\section{Introduction}\label{sec:intro}

Emerging vehicular-to-everything (V2X) and low Earth orbit (LEO) satellite links exhibit significant selectivity in both time and frequency. This poses challenges for the widely adopted orthogonal frequency-division multiplexing (OFDM) systems. As an alternative, the orthogonal time-frequency space (OTFS) modulation \cite{Hadani2017OTFS_WCNC} was proposed to harness channel diversity in the delay-Doppler (DD) domain, and it has since motivated waveform designs that exploit the DD-domain channel representation. Taking inspiration from OTFS, \cite{Lin2022OrthogonalModulation} recently introduced the orthogonal DD division multiplexing (ODDM) modulation. The basis function construction in ODDM explicitly matches the inherent operations of doubly selective channels, providing a coupling between the DD spreading function and a low-dimensional set of physical parameters.

Despite the promising features of ODDM/OTFS, the DD spreading in doubly selective channels induces inter-symbol interference (ISI), which significantly complicates receiver design. In particular, the pronounced ISI motivates iterative, nonlinear data detection algorithms to strike a balance between error performance and computational complexity \cite{Raviteja2018MPA,Thaj2020LowSystems,Li2022CrossModulation,Huang2024ODDM_Performance,Li2025OTFS_ParallelCoding}. Among these algorithms, orthogonal approximate message passing (OAMP) is a promising candidate, known for its asymptotic optimality for linear observation models with right-unitarily invariant (RUI) measurement matrices \cite{Ma2017OAMP}. Although the channel matrices of ODDM/OTFS are not RUI in general, \cite{Wen2022OAMP_OTFS} has empirically demonstrated the superior performance of OAMP compared to conventional detectors in \cite{Raviteja2018MPA,Thaj2020LowSystems,Li2022CrossModulation}.

Standalone detection algorithms assume perfect channel state information (CSI) at the receiver \cite{Raviteja2018MPA,Thaj2020LowSystems,Li2022CrossModulation,Wen2022OAMP_OTFS,Li2025OTFS_ParallelCoding}. In point-to-point links, this corresponds to a two-stage pipeline: the CSI is first estimated from known pilot symbols and then treated as known for subsequent data detection. This pipeline requires receiver-side separation between pilot and data symbols to prevent mutual interference from degrading each stage, often achieved by guard intervals \cite{Raviteja2019EmbeddedChannels,Huang2024ODDM_Performance}. Considering the 2D DD spreading in doubly selective channels, guard intervals can be very expensive in terms of spectral efficiency. To address this limitation, joint channel estimation and data detection (JED) receivers \cite{Mishra2022OTFS_SuperimposedPilots,Wang2022OTFS_JED_VBI,Wen2024MFOAMP,Yuan2021DataAided_ChanEst} have been proposed, enabling superimposed pilots with minimal overhead.

In this paper, we investigate advanced JED for ODDM with a superimposed pilot. We focus on a specific type of JED receivers \cite{Mishra2022OTFS_SuperimposedPilots,Wang2022OTFS_JED_VBI,Wen2024MFOAMP} that actively use detected data symbols as soft priors to iteratively improve channel estimation, which can provide higher signal-to-noise ratio (SNR) gain compared to the interference cancellation-based JED receivers \cite{Yuan2021DataAided_ChanEst}. For example, the superimposed pilot-iterative (SP-I) receiver in \cite{Mishra2022OTFS_SuperimposedPilots} alternates between a soft linear minimum mean square error (MMSE) (LMMSE) channel estimator and a message passing algorithm detector. However, SP-I relies on prior knowledge of the channel support to achieve good performance, which may not be available in practice. The Bayesian inference receiver in \cite{Wang2022OTFS_JED_VBI} provides a principled JED method under reasonable prior assumptions, but it requires a nested data-detection loop and leads to prohibitive computational complexity for typical frame sizes. Leveraging the powerful OAMP framework for both channel estimation and data detection, \cite{Wen2024MFOAMP} proposed a mean-field (MF)-OAMP receiver that iteratively updates measurement matrices by MF-aided rowwise scaling, which, however, can bias the observation model.

Motivated by OAMP's effectiveness for detection and the limitations of existing JED receivers, we propose the OAMP-aided JED (OAMP-JED) receiver for ODDM to achieve superior BER performance. Our main contributions are as follows:

\textbf{1}) We establish a bilinear JED model for doubly selective channels. By conditionally linearizing this model with respect to the channel and data variables, we derive the corresponding OAMP updates for the channel-estimation and data-detection subproblems.

\textbf{2}) We propose a novel OAMP-JED receiver that alternates between two OAMP modules for channel estimation and data detection, effectively coupled through a variational noise term to account for model uncertainty. Importantly, we derive closed-form scalar-variance updates to characterize the noise, enabling efficient and reliable soft information exchange.

\textbf{3}) We provide extensive simulation results and show that the proposed OAMP-JED receiver can closely approach the BER performance of OAMP with perfect CSI for both uncoded and coded ODDM systems.

\textbf{Notations}: $(\cdot)^*$ and $(\cdot)^\herm$ denote the complex conjugate and Hermitian transpose, respectively. $\boldsymbol{F}_N$ denotes the normalized $N$-point DFT matrix. $\delta(\cdot)$ denotes the Dirac delta function and $\delta_{\cdot,\cdot}$ denotes the Kronecker delta. $\norm{\boldsymbol{A}}$ outputs the 2-norm of matrix $\boldsymbol{A}$. $\mathcal{A}\times \mathcal{B}$ gives the Cartesian product of two sets.

\section{System Model}\label{sec:sysmod}

\subsection{Transmitter}
We consider a data sequence $\boldsymbol{x}_\mathrm{d}\in\mathcal{A}^{N_s}$ and a superimposed pilot sequence $\boldsymbol{x}_\mathrm{p}\in\C^{N_s}$, where $\mathcal{A}\subset\C$ denotes the constellation alphabet.\footnote{
    We allocate pilots in the data domain for notational simplicity. However, the proposed scheme can be straightforwardly extended to time-domain pilots.
} This leads to the transmit sequence
\begin{align}
    \boldsymbol{x}=\boldsymbol{x}_\mathrm{d}+\boldsymbol{x}_\mathrm{p},
\end{align}
where we define the pilot-to-data energy ratio as $\gamma \triangleq E_p/E_d$, with $E_d \triangleq \frac{1}{N_s}\E\bigl[\norm{\boldsymbol{x}_\mathrm{d}}^2\bigr]$ and $E_p \triangleq \frac{1}{N_s}\E\bigl[\norm{\boldsymbol{x}_\mathrm{p}}^2\bigr]$.

We then consider a unitary modulation scheme
\begin{align}
    \boldsymbol{s} = \boldsymbol{F}_{\!\mathrm{mod}}\boldsymbol{x},
\end{align}
where $\boldsymbol{s}\in\C^{N_s}$ and $\boldsymbol{F}_{\!\mathrm{mod}}\in\C^{N_s\times N_s}$ are the time-domain modulated sequence and the modulation matrix, respectively. We adopt the zero-padded (ZP) frame structure \cite{Thaj2020LowSystems} for transmission, where $\boldsymbol{s}$ is segmented into $N$ shorter sequences of length $M_d$ such that $N_s = M_dN$. Each segment is padded with $M_0$ zeros to get a length of $M=M_d+M_0$. The time-domain transmitted sequence $\boldsymbol{s}_{\mathrm{zp}}\in\C^{MN}$ is expressed as
\begin{align}
    \boldsymbol{s}_{\mathrm{zp}}=\boldsymbol{F}_{\mathrm{zp}}\boldsymbol{s},
\end{align}
with
\begin{align}
    \boldsymbol{F}_{\mathrm{zp}} = \boldsymbol{I}_N \otimes
    \begin{bmatrix}
        \boldsymbol{I}_{M_d} \\
        \boldsymbol{0}_{M_0\times M_d}
    \end{bmatrix}
    \in\{0,1\}^{N_r\times N_s}.
\end{align}

\begin{figure}[t]
    \centering
    \subfloat[DD-domain representation.]{\label{fig:frame_dd}\includegraphics[scale=0.52,trim={0 0 -30 0},clip]{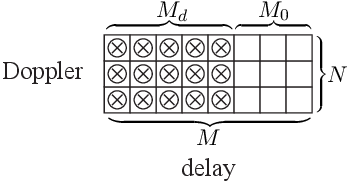}}
    \\
    \centering
    \subfloat[Time-domain representation.]{\label{fig:frame_time}\includegraphics[scale=0.52,trim={0 0 0 0},clip]{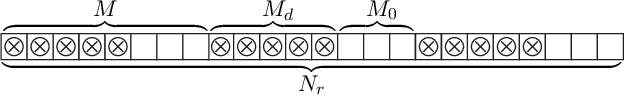}}
    \\
    \centering
    \subfloat{\includegraphics[scale=0.52,trim={0 0 0 0},clip]{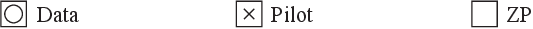}}
    \caption{Frame structure of $\boldsymbol{s}_{\mathrm{zp}}$ with $M=8,N=3,M_0=3$.}
    \label{fig:szp}
\end{figure}

When $\boldsymbol{s}_{\mathrm{zp}}$ is transmitted with a root-Nyquist pulse, it results in the ODDM waveform \cite{Lin2022OrthogonalModulation}. In that case, $\boldsymbol{x}$ is the DD-domain symbol vector with $\boldsymbol{s}$ being its time-domain representation. We transmit $\boldsymbol{s}_{\mathrm{zp}}$ over bandwidth $\frac{M}{T}$ for duration $NT$, yielding delay resolution $\Delta\tau=\frac{T}{M}$ and Doppler resolution $\Delta\nu = \frac{1}{NT}$. The corresponding ODDM modulation matrix is\footnote{
    The proposed scheme applies to general unitary modulation matrices.
}
\begin{align}
    \boldsymbol{F}_{\!\mathrm{mod}} = \boldsymbol{P}\left(\boldsymbol{I}_{M_d}\otimes\boldsymbol{F}_N^\herm\right),
\end{align}
where $\boldsymbol{P}\in\{0,1\}^{{M_d}N\times{M_d}N}$ is a permutation matrix
\begin{align}
    (\boldsymbol{P})_{i,j}=
    \begin{cases}
        0, & i=n{M_d}+m,j=mN+n,
        \\
        1, & \text{otherwise},
    \end{cases}
\end{align}
for $m=0,\dots,{M_d}-1$ and $n=0,\dots,N-1$. The frame structure of $\boldsymbol{s}_{\mathrm{zp}}$ and its DD-domain representation are illustrated in Fig. \ref{fig:szp}.

\subsection{Doubly Selective Channel}\label{sec:ddchannel}

Consider a doubly selective channel with $P$ resolvable paths, characterized by the DD-domain spreading function: $h(\tau,\nu)=\sum_{p=1}^{P} h_p \delta (\tau-\tau_p)\delta (\nu-\nu_p)$, where $h_p$, $\tau_p$, and $\nu_p$ are the channel gain, delay shift, and Doppler shift of the $p$-th path, respectively \cite{Lin2022OrthogonalModulation}. Let $l_p$ and $k_p$ denote the normalized delay and Doppler shifts, so that $\tau_p=l_p\Delta\tau$ and $\nu_p=k_p\Delta\nu$. Also denote the sets of normalized delay and Doppler shifts by $\mathcal{L}=\{0,\dots,l_{\max}\}$ and $\mathcal{K}=\{-k_{\max},\dots,k_{\max}\}$, with maximum delay and Doppler shifts $l_{\max}$ and $k_{\max}$. Here we assume on-grid delay and Doppler shifts, i.e., $l_p,k_p\in\mathbb{Z}$. To cover all potential channel responses, we predefine sufficiently large $l_{\max}$ and $k_{\max}$, and assume $l_p\leq l_{\max}$ and $\abs{k_p}\leq k_{\max}$ for $p=1,\dots,P$. Then, the DD channel can be rewritten as:
\begin{equation}\label{ddChan_int}
    h[l,k] = \sum\nolimits_{p=1}^{P} h_p\,\delta_{l,l_p}\,\delta_{k,k_p}, \quad (l,k)\in\mathcal{L}\times\mathcal K.
\end{equation}
Accordingly, we set the zero padding length to $M_0=l_{\max}$.

Notably, this paper focuses on the on-grid channel model. In practice, off-grid delay and Doppler shifts introduce model mismatch, which may degrade the channel estimation module in Section~\ref{sec:modA}. Such mismatch can be mitigated using established techniques, such as grid refinement \cite{Shan2025ODDM_GridRefinement} and Taylor-series approximation \cite{Shi2024BiVAMP_OTFS_ChanEst}, which we leave for future work.

\subsection{Receiver}

The waveform passed through the channel is then matched-filtered with the same root-Nyquist pulse at the receiver. After sampling at $t=q\frac{T}{M}$ for $q=0,\dots,N_r-1$, the $q$-th element of time-domain received vector $\boldsymbol{r}\in\C^{N_r}$ is given by \cite{Lin2022OrthogonalModulation}
\begin{equation}\label{eq:io_time}
    r[q] = \sum_{l\in\mathcal{L},l\leq q}\sum_{k\in\mathcal{K}} h[l,k] e^{j2\pi\frac{k(q-l)}{MN}} s_{\mathrm{zp}}\bigl[q-l\bigr] + w[q],
\end{equation}
where $w[q]\sim\mathcal{CN}\left(0,\sigma_w^2\right)$ is the sampled AWGN. The observation length is $N_r=MN$. Then, the effective time-domain input-output (IO) relation for $\boldsymbol{s}$ can be written as
\begin{align}\label{eq:io_time_mat}
    \boldsymbol{r}=\boldsymbol{G}\boldsymbol{s}+\boldsymbol{w},
\end{align}
with $\boldsymbol{w}\sim\mathcal{CN}\left(\boldsymbol{0},\sigma_w^2\boldsymbol{I}\right)$ and $\boldsymbol{G}=\tilde{\boldsymbol{G}}\boldsymbol{F}_{\mathrm{zp}}$, where $\tilde{\boldsymbol{G}}\in\C^{N_r\times N_r}$ has its $(q_1,q_2)$-th element being
\begin{align}\label{eq:G}
    [\tilde{\boldsymbol{G}}]_{q_1,q_2} {=}
    \begin{cases}
        \sum_{k\in\mathcal{K}} h[q_1{-}q_2,k] e^{j2\pi\frac{kq_2}{MN}}, & q_1{-}q_2\in\mathcal{L},
        \\
        0, & \text{otherwise}.
    \end{cases}
\end{align}

In this work, we consider the JED problem, where $\boldsymbol{G}$ is unknown and needs to be estimated along with $\boldsymbol{s}$, making \eqref{eq:io_time_mat} bilinear. Therefore, we can reparameterize \eqref{eq:io_time_mat} to be explicitly linear in a channel vector $\boldsymbol{h} \triangleq \left[h[l,k]\right]_{(l,k)\in\mathcal{L}\times\mathcal{K}}\in\mathbb C^{N_h}$ for channel estimation, where $N_h=\abs{\mathcal{L}}\abs{\mathcal{K}}$ is the maximum number of channel taps. Specifically, we write
\begin{align}\label{eq:rBh}
    \boldsymbol{r}=\boldsymbol{B}\boldsymbol{h}+\boldsymbol{w},
\end{align}
where
\begin{equation}\label{eq:B}
\boldsymbol{B} \triangleq \left[\boldsymbol{b}_{l,k}(\boldsymbol{s})\right]_{(l,k)\in\mathcal{L}\times\mathcal{K}}\in\C^{N_r\times N_h}
\end{equation}
is the signal matrix with each signal basis $\boldsymbol{b}_{l,k}(\boldsymbol{s})\in\C^{N_r}$ stacked as columns. The $q$-th element of $\boldsymbol{b}_{l,k}(\boldsymbol{s})$ is $[\boldsymbol{b}_{l,k}(\boldsymbol{s})]_q = e^{j2\pi\frac{k(q-l)}{MN}} s_{\mathrm{zp}}[q-l]$, for $q\geq l$ and $q\in\{q=0,\dots,N_r-1\}$.

\section{OAMP-Aided Joint Channel Estimation and Data Detection over Doubly Selective Channels}\label{sec:oamp-jed}

In this section, we propose the OAMP-JED receiver for ODDM over doubly selective channels. To facilitate derivation, we first introduce two OAMP modules based on conditional linearized models \cite{Ma2017OAMP}. Module~A performs channel estimation conditioned on the current soft data estimate, whereas Module~B performs data detection conditioned on the current channel estimate. The two modules are then coupled through principled scalar-variance updates to form the complete OAMP-JED receiver, where the model uncertainty induced by conditional linearization is absorbed into an effective white-noise term.

\subsection{OAMP for Channel Estimation (Module~A)}\label{sec:modA}
Based on \eqref{eq:rBh}, we first consider the following linearized observation model for channel estimation:
\begin{align}\label{eq:rBhv}
    \boldsymbol{r}=\boldsymbol{B}\boldsymbol{h}+\boldsymbol{v},
\end{align}
where $\boldsymbol{B}$ is conditioned on the current soft data estimate. We introduce a variational effective noise term $\boldsymbol{v}$ to account for the model uncertainty introduced by this conditioning and AWGN, which can be modeled as Gaussian due to the central limit theorem. In general iterative receivers, $\boldsymbol{v}$ is colored due to error propagation \cite{Huang2024ODDM_Performance}. In the proposed OAMP-JED framework, however, $\boldsymbol{v}$ can be approximated as a white Gaussian term, i.e., $\boldsymbol{v}\sim\mathcal{CN}(\boldsymbol{0},\sigma_v^2\boldsymbol{I})$, as detailed in Section~\ref{sec:JEDcoupling}.

The OAMP algorithm solves such a linear system by evolving between a decorrelated linear estimator (LE) and a divergence-free nonlinear estimator (NLE) \cite{Ma2017OAMP}. Starting with prior channel vector mean $\tilde{\boldsymbol{h}}_\alpha^{(0)}$, the OAMP update of Module~A at iteration $\ell$ of the proposed algorithm is written as
\begin{subequations}\label{eq:oamp_ce}
\begin{align}
&\text{LE:} &\tilde{\boldsymbol{h}}_\beta^{(\ell)} &= f_h\Bigl(\tilde{\boldsymbol{h}}_\alpha^{(\ell)}\Bigr),\qquad\qquad
\label{eq:oamp_ce_le}\\
&\text{NLE:} &\tilde{\boldsymbol{h}}_\alpha^{(\ell+1)} &= \eta_h\Bigl(\tilde{\boldsymbol{h}}_\beta^{(\ell)}\Bigr).
\label{eq:oamp_ce_nle}
\end{align}
\end{subequations}
Here, the subscript $h$ indicates functions for the estimation of $\boldsymbol{h}$. For notational brevity, we drop the iteration index $\ell$ in the following. The LE
\begin{align}\label{eq:fh_le}
    f_h\Bigl(\tilde{\boldsymbol{h}}_\alpha\Bigr)=\tilde{\boldsymbol{h}}_\alpha + \boldsymbol{W}_{\!\!h}\Bigl(\boldsymbol{r} - \boldsymbol{B} \tilde{\boldsymbol{h}}_\alpha\Bigr)
\end{align}
is subject to the decorrelated constraint $\tr(\boldsymbol{I}\!\!-\!\!\boldsymbol{W}_{\!\!h}\boldsymbol{B})\!\!=\!\!0$, while the NLE $\eta_h\Bigl(\tilde{\boldsymbol{h}}_\beta\Bigr)$ is a component-wise Lipschitz continuous function subject to the divergence-free constraint $\frac{1}{N_h}\sum_{i=0}^{N_h-1}\frac{\partial\eta\left(\tilde{h}_\beta[i]\right)}{\partial\tilde{h}_\beta[i]}{=}0$. It has been proven that \eqref{eq:oamp_ce} ensures orthogonality between the input and output errors of LE and NLE if $\boldsymbol{B}$ is RUI \cite{Ma2017OAMP}. This property admits accurate characterization of OAMP's asymptotic error performance via recursive scalar mean squared error (MSE) updates: $\phi_h^2{=}\E\bigl[(\tilde{\boldsymbol{h}}_\beta{-}\boldsymbol{h})^2\bigr]$ and  $\chi_h^2{=}\E\bigl[(\tilde{\boldsymbol{h}}_\alpha{-}\boldsymbol{h})^2\bigr]$, namely state evolution (SE).

With error orthogonality, one can design the optimal LE and NLE to minimize the MSE under the decorrelated and divergence-free constraints. The optimal LE is a scaled LMMSE estimator with \cite{Ma2017OAMP}
\begin{align}\label{eq:WHat_B}
    \boldsymbol{W}_{\!\!h}{=}\frac{N_h}{\tr\Bigl(\hat{\boldsymbol{W}}_{\!\!h}\boldsymbol{B}\Bigr)}\hat{\boldsymbol{W}}_{\!\!h}, \quad\, \hat{\boldsymbol{W}}_{\!\!h}{=}\boldsymbol{B}^\herm\biggl(\!\boldsymbol{B}\boldsymbol{B}^\herm{+}\frac{\sigma_v^2}{\chi_h^2}\boldsymbol{I}_{N_r}\!\biggr)^{-1},
\end{align}
whose MSE can be approximated as
\begin{align}\label{eq:phi2_h}
    \phi_h^2 {=} \frac{1}{N_h}\!\Bigl(\chi_h^2\tr\Bigl(\!(\boldsymbol{I}{-}\boldsymbol{W}_{\!\!h}\boldsymbol{B})(\boldsymbol{I}{-}\boldsymbol{W}_{\!\!h}\boldsymbol{B})^\herm\!\Bigr){+}\sigma_v^2\tr\left(\boldsymbol{W}_{\!\!h}\boldsymbol{W}_{\!\!h}^\herm\right)\Bigr).
\end{align}
The optimal NLE is given by \cite{Ma2017OAMP}
\begin{align}\label{eq:etah}
    \eta_h(\tilde{\boldsymbol{h}}_\beta) &= \frac{\phi_h^2}{\phi_h^2{-}mse(\phi_h^2)}\left(\hat{\eta}_h\left(\tilde{\boldsymbol{h}}_\beta\right)-\frac{mse(\phi_h^2)}{\phi_h^2}\tilde{\boldsymbol{h}}_\beta\right),
\end{align}
where $\hat{\eta}_h\bigl(\tilde{\boldsymbol{h}}_\beta\bigr)=\E\bigl[\boldsymbol{h}|\tilde{\boldsymbol{h}}_\beta=\boldsymbol{h}+\phi_h\boldsymbol{z},\boldsymbol{z}{\sim}\mathcal{CN}(\boldsymbol{0},\boldsymbol{I})\bigr]$ is the MMSE denoiser and $mse\left(\phi_h^2\right)$ is the posterior MSE. To account for the sparsity of doubly selective channels, the MMSE denoiser uses a Bernoulli-Gaussian prior for $\boldsymbol{h}$ \cite{Wen2024MFOAMP}. The corresponding MSE for NLE is approximated as \cite{Ma2017OAMP}
\begin{align}\label{eq:chi2_h}
    \chi_h^2 = \frac{\bigl\lVert\boldsymbol{r}-\boldsymbol{B}\tilde{\boldsymbol{h}}_\alpha\bigr\rVert^2-N_r\sigma_v^2}{\tr\bigl(\boldsymbol{B}^\herm\boldsymbol{B}\bigr)}.
\end{align}

\subsection{Cross-Domain OAMP for Data Detection (Module~B)}\label{sec:modB}
Similar to Module~A, we consider the following linearized observation model for data detection based on \eqref{eq:io_time_mat}:
\begin{align}\label{eq:rGsu}
    \boldsymbol{r}=\boldsymbol{G}\boldsymbol{s}+\boldsymbol{u},
\end{align}
where $\boldsymbol{G}$ is conditioned on the current channel estimate and $\boldsymbol{u}\sim\mathcal{CN}(\boldsymbol{0},\sigma_u^2\boldsymbol{I})$ denotes the corresponding variational effective noise term. A low-complexity cross-domain OAMP (CD-OAMP) step is formulated for data detection over \eqref{eq:rGsu}. Starting with prior data-domain signal mean $\tilde{\boldsymbol{x}}_\alpha^{(0)}$, the CD-OAMP update of Module~B at iteration $\ell$ can be written as
\begin{subequations}\label{eq:oamp_dd}
\begin{align}
&\text{LE:} &\tilde{\boldsymbol{s}}_\beta^{(\ell)} &= f_s\Bigl(\tilde{\boldsymbol{s}}_\alpha^{(\ell)}\Bigr),\ \tilde{\boldsymbol{s}}_\alpha^{(\ell)}=\boldsymbol{F}_{\!\mathrm{mod}}\tilde{\boldsymbol{x}}_\alpha^{(\ell)},
\label{eq:oamp_dd_le}\\
&\text{NLE:} &\tilde{\boldsymbol{x}}_\alpha^{(\ell+1)} &= \eta_s\Bigl(\tilde{\boldsymbol{x}}_\beta^{(\ell)}\Bigr),\ \tilde{\boldsymbol{x}}_\beta^{(\ell)}=\boldsymbol{F}_{\!\mathrm{mod}}^\herm\tilde{\boldsymbol{s}}_\beta^{(\ell)},
\label{eq:oamp_dd_nle}
\end{align}
\end{subequations}
Here, the subscript $s$ indicates functions for the estimation of $\boldsymbol{s}$, and $f_s(\cdot)$ follows the same update structure as in \eqref{eq:fh_le}.

Accordingly, the optimal LE adopts a scaled LMMSE matrix given by
\begin{align}\label{eq:WHat_G}
    \boldsymbol{W}_{\!\!s}{=}\frac{N_s}{\tr(\hat{\boldsymbol{W}}_{\!\!s}\boldsymbol{G})}\hat{\boldsymbol{W}}_{\!\!s},\ \ \,\hat{\boldsymbol{W}}_{\!\!s}{=}\boldsymbol{G}^\herm\biggl(\!\boldsymbol{G}\boldsymbol{G}^\herm{+}\frac{\sigma_u^2}{\chi_s^2}\boldsymbol{I}_{N_r}\!\biggr)^{-1}.
\end{align}
The MSE of the LE message can be approximated as
\begin{align}\label{eq:phi2_s}
    \phi_s^2 {=} \frac{1}{N_s}\!\Bigl(\chi_s^2\tr\Bigl(\!(\boldsymbol{I}{-}\boldsymbol{W}_{\!\!s}\boldsymbol{G})(\boldsymbol{I}{-}\boldsymbol{W}_{\!\!s}\boldsymbol{G})^\herm\!\Bigr){+}\sigma_u^2\tr\bigl(\boldsymbol{W}_{\!\!s}\boldsymbol{W}_{\!\!s}^\herm\bigr)\Bigr).
\end{align}
The optimal NLE has the same form as \eqref{eq:etah}, except that the MMSE denoiser uses data-dependent priors. For uncoded ODDM, $\hat{\eta}_s\left(\tilde{\boldsymbol{x}}_\beta\right)$ and $mse(\phi_s^2)$ are the posterior mean and average posterior variance of symbols after symbol-wise detection under constellation $\mathcal{A}$. For coded ODDM, they are computed from the decoder output as in \cite{Li2025OTFS_ParallelCoding}. The MSE of the NLE message can be approximated as
\begin{align}\label{eq:chi2_s}
    \chi_s^2 = \frac{\bigl\lVert\boldsymbol{r}-\boldsymbol{G}\tilde{\boldsymbol{s}}_\alpha\bigr\rVert^2-N_r\sigma_u^2}{\tr\bigl(\boldsymbol{G}^\herm\boldsymbol{G}\bigr)}.
\end{align}

\subsection{Bayesian Noise Inflation for JED}\label{sec:JEDcoupling}

Now we couple Module~A and Module~B and propose the OAMP-JED receiver to jointly estimate both the channel vector $\boldsymbol{h}$ and the signal vector $\boldsymbol{s}$. While modules A and B operate with white effective noises $\boldsymbol{v}$ and $\boldsymbol{u}$ to comply with the OAMP framework, we begin the derivation from more general Gaussian noise models $\boldsymbol{v}\sim\mathcal{CN}(\boldsymbol{0},\boldsymbol{\Sigma}_v)$ and $\boldsymbol{u}\sim\mathcal{CN}(\boldsymbol{0},\boldsymbol{\Sigma}_u)$ to capture the colored uncertainty induced by conditional linearization. We first derive the variational noise term $\boldsymbol{v}$ in \eqref{eq:rBhv} for Module~A. The optimal choice of $\boldsymbol{\Sigma}_v$ is
\begin{align}
    \boldsymbol{\Sigma}_v{=}\argmax_{\boldsymbol{\Sigma}_v} p(\boldsymbol{r}\mid\boldsymbol{\Sigma}_v){=}\argmax_{\boldsymbol{\Sigma}_v} \int p(\boldsymbol{r},\boldsymbol{h}\mid\boldsymbol{\Sigma}_v)d\boldsymbol{h},
\end{align}
which, however, is intractable since $\boldsymbol{v}=\boldsymbol{r}-\boldsymbol{B}\boldsymbol{h}$ is correlated with $\boldsymbol{h}$, resulting in a mixture model.

Inspired by \cite{Moon1996EM}, we treat $\boldsymbol{h}$ as a latent variable and learn $\boldsymbol{\Sigma}_v$ by maximizing its expected complete-data log-likelihood. We first write the log-likelihood function
\begin{align}
    \log p(\boldsymbol{r}\mid\boldsymbol{\Sigma}_v)&=\log p(\boldsymbol{r},\boldsymbol{h}\mid\boldsymbol{\Sigma}_v)-\log p(\boldsymbol{h}\mid\boldsymbol{r},\boldsymbol{\Sigma}_v).
\end{align}
Then, we take expectation of both sides over $p\left(\boldsymbol{h}\middle|\boldsymbol{r},\boldsymbol{\Sigma}_v^{(\ell)}\right)$ under the last-iteration estimate $\boldsymbol{\Sigma}_v^{(\ell)}$, yielding
\begin{align}\label{eq:llQH_v}
    \log p(\boldsymbol{r}\mid\boldsymbol{\Sigma}_v)=Q\left(\boldsymbol{\Sigma}_v\middle|\boldsymbol{\Sigma}_v^{(\ell)}\right)+H\left(\boldsymbol{\Sigma}_v\middle|\boldsymbol{\Sigma}_v^{(\ell)}\right),
\end{align}
where
\begin{align}
    Q\left(\boldsymbol{\Sigma}_v\middle|\boldsymbol{\Sigma}_v^{(\ell)}\right){=}\int p\left(\boldsymbol{h}\middle|\boldsymbol{r},\boldsymbol{\Sigma}_v^{(\ell)}\right) \log p(\boldsymbol{r},\boldsymbol{h}\mid\boldsymbol{\Sigma}_v)d\boldsymbol{h}
\end{align}
is the expected complete-data log-likelihood, and
\begin{align}
    H\left(\boldsymbol{\Sigma}_v\middle|\boldsymbol{\Sigma}_v^{(\ell)}\right){=}-\!\!\int p\left(\boldsymbol{h}\middle|\boldsymbol{r},\boldsymbol{\Sigma}_v^{(\ell)}\right)\log p(\boldsymbol{h}\mid\boldsymbol{r},\boldsymbol{\Sigma}_v)d\boldsymbol{h}
\end{align}
is the cross-entropy between the last-iteration estimate and the current one. By Gibbs' inequality, $\log p(\boldsymbol{r}\mid\boldsymbol{\Sigma}_v)$ is lower bounded by $Q\left(\boldsymbol{\Sigma}_v\middle|\boldsymbol{\Sigma}_v^{(\ell)}\right)$ up to an additive constant independent of $\boldsymbol\Sigma_v$ \cite{Moon1996EM}. Hence, we can indirectly maximize $\log p(\boldsymbol{r}\mid\boldsymbol{\Sigma}_v)$ by maximizing $Q\left(\boldsymbol{\Sigma}_v\middle|\boldsymbol{\Sigma}_v^{(\ell)}\right)$. As $p(\boldsymbol{r},\boldsymbol{h}\mid\boldsymbol{\Sigma}_v) = p(\boldsymbol{r}\mid\boldsymbol{h},\boldsymbol{\Sigma}_v)p(\boldsymbol{h})$, the problem further simplifies to
\begin{align}
    \boldsymbol{\Sigma}_v = \argmax_{\boldsymbol{\Sigma}_v}\tilde{Q}\left(\boldsymbol{\Sigma}_v\middle|\boldsymbol{\Sigma}_v^{(\ell)}\right),
    \label{eq:Q2}
\end{align}
where 
\begin{align}
    \tilde{Q}\left(\boldsymbol{\Sigma}_v\middle|\boldsymbol{\Sigma}_v^{(\ell)}\right) = \E_{p\left(\boldsymbol{h}\middle|\boldsymbol{r},\boldsymbol{\Sigma}_v^{(\ell)}\right)}[\log p(\boldsymbol{r}\mid\boldsymbol{h},\boldsymbol{\Sigma}_v)].
\end{align}
Given that $\boldsymbol{v}\sim\mathcal{CN}(\boldsymbol{0},\boldsymbol{\Sigma}_v)$, we have
\begin{align}
    \tilde{Q}\left(\!\boldsymbol{\Sigma}_v\middle|\boldsymbol{\Sigma}_v^{(\ell)}\!\right) {=} {-}\tr(\boldsymbol{\Sigma}_v^{-1}\boldsymbol{C}_h){-}\log\det(\boldsymbol{\Sigma}_v){-}N_r\log\pi,
\end{align}
where
\begin{align}\label{eq:Ch}
    \boldsymbol{C}_h=\E_{p\left(\boldsymbol{h}\middle|\boldsymbol{r},\boldsymbol{\Sigma}_v^{(\ell)}\right)}\left[(\boldsymbol{r}-\boldsymbol{B}\boldsymbol{h})(\boldsymbol{r}-\boldsymbol{B}\boldsymbol{h})^\herm\right].
\end{align}

To comply with the OAMP principle, we project $\boldsymbol{\Sigma}_v$ onto a white noise model to pass a scalar noise variance. This projection is supported by the error orthogonality provided by OAMP, which renders $\boldsymbol{v}$ asymptotically i.i.d. With such an isotropic noise constraint $\boldsymbol{\Sigma}_v=\sigma_v^2\boldsymbol{I}$, \eqref{eq:Q2} becomes
\begin{align}
    \sigma_v^2 &= \argmax_{\sigma_v^2}\left({-}\frac{1}{\sigma_v^2}\tr(\boldsymbol{C}_h){-}N_r\log\sigma_v^2{-}N_r\log\pi\right),
    \nonumber\\
    &=\frac{1}{N_r}\tr(\boldsymbol{C}_h).
    \label{eq:sigma2v}
\end{align}

Now we can derive the closed-form expression for $\sigma_v^2$ by incorporating the variational distribution of $\boldsymbol{s}$ and $\boldsymbol{h}$ from OAMP, given in the following proposition.
\begin{proposition}\label{prop:sigma2v}
    When the signal matrix $\boldsymbol{B}(\tilde{\boldsymbol{s}}_\alpha)$ is approximated by the NLE message $\tilde{\boldsymbol{s}}_\alpha$ of the OAMP data detector, the variance of the effective white noise $\boldsymbol{v}$ in \eqref{eq:rBhv} is given by
    \begin{align}\label{eq:sigma2v_final}
        \sigma_v^2=\sigma_w^2+\frac{\chi_s^2}{N_r}\left(\chi_h^2N_hN_s + \tilde{\boldsymbol{h}}_\alpha^\herm\left(\sum_q\boldsymbol{B}_q^\herm\boldsymbol{B}_q\right)\tilde{\boldsymbol{h}}_\alpha\right),
    \end{align}
    where $\boldsymbol{B}_q=\boldsymbol{B}\left(\boldsymbol{e}_q^s\right)$ and $\boldsymbol{e}_q^s\in\C^{N_s}$ is the one-hot vector with its $q$-th element being $1$ and $0$ otherwise.
\end{proposition}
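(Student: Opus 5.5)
Starting from \eqref{eq:sigma2v}, $\sigma_v^2=\frac{1}{N_r}\tr(\boldsymbol{C}_h)=\frac{1}{N_r}\E\bigl[\norm{\boldsymbol{r}-\boldsymbol{B}(\tilde{\boldsymbol{s}}_\alpha)\boldsymbol{h}}^2\bigr]$. The plan is to substitute the true model $\boldsymbol{r}=\boldsymbol{B}(\boldsymbol{s})\boldsymbol{h}+\boldsymbol{w}$, which is \eqref{eq:rBh}. We then use the fact that $\boldsymbol{B}(\cdot)$ is linear in its argument, since each $\boldsymbol{b}_{l,k}(\boldsymbol{s})$ is linear in $\boldsymbol{s}_{\mathrm{zp}}=\boldsymbol{F}_{\mathrm{zp}}\boldsymbol{s}$. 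Together these give
\begin{align}
    \boldsymbol{r}-\boldsymbol{B}(\tilde{\boldsymbol{s}}_\alpha)\boldsymbol{h}=\boldsymbol{B}(\boldsymbol{s}-\tilde{\boldsymbol{s}}_\alpha)\boldsymbol{h}+\boldsymbol{w}
    =\sum\nolimits_q \epsilon_q\boldsymbol{B}_q\boldsymbol{h}+\boldsymbol{w},\nonumber
\end{align}
where $\boldsymbol{\epsilon}=\boldsymbol{s}-\tilde{\boldsymbol{s}}_\alpha$ and $\boldsymbol{B}_q=\boldsymbol{B}(\boldsymbol{e}_q^s)$. We take the expectation over the variational (OAMP) distributions. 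For the data error, $\boldsymbol{\epsilon}$ is modeled as zero-mean with covariance $\chi_s^2\boldsymbol{I}$; this uses the NLE MSE $\chi_s^2$ and the fact that $\boldsymbol{F}_{\!\mathrm{mod}}$ is unitary, so the data-domain i.i.d. error of variance $\chi_s^2$ maps to a time-domain error of the same variance. For the channel, $\boldsymbol{h}=\tilde{\boldsymbol{h}}_\alpha+\boldsymbol{\delta}$, with $\boldsymbol{\delta}$ zero-mean of covariance $\chi_h^2\boldsymbol{I}$. Finally, $\boldsymbol{w}$ is AWGN, and $\boldsymbol{\epsilon}$, $\boldsymbol{\delta}$, $\boldsymbol{w}$ are mutually independent.

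The key steps, in order, are as follows. First, the cross terms with $\boldsymbol{w}$ vanish by independence and zero mean, which contributes $N_r\sigma_w^2$. Second, we use independence of $\boldsymbol{\epsilon}$ and $\boldsymbol{h}$ together with $\E[\epsilon_q^*\epsilon_{q'}]=\chi_s^2\delta_{q,q'}$ to obtain
\begin{align}
    \E\Bigl[\bigl\lVert\textstyle\sum_q\epsilon_q\boldsymbol{B}_q\boldsymbol{h}\bigr\rVert^2\Bigr]=\chi_s^2\,\E\Bigl[\boldsymbol{h}^\herm\bigl(\textstyle\sum_q\boldsymbol{B}_q^\herm\boldsymbol{B}_q\bigr)\boldsymbol{h}\Bigr].\nonumber
\end{align}
Third, we expand $\boldsymbol{h}=\tilde{\boldsymbol{h}}_\alpha+\boldsymbol{\delta}$. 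The cross terms drop, leaving the quadratic form $\tilde{\boldsymbol{h}}_\alpha^\herm(\sum_q\boldsymbol{B}_q^\herm\boldsymbol{B}_q)\tilde{\boldsymbol{h}}_\alpha$ plus $\chi_h^2\tr(\sum_q\boldsymbol{B}_q^\herm\boldsymbol{B}_q)$. Dividing by $N_r$ then yields the claimed form.

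The remaining step, which I expect to be the main (though mild) obstacle, is evaluating $\tr(\sum_q\boldsymbol{B}_q^\herm\boldsymbol{B}_q)=\sum_q\sum_{(l,k)}\norm{\boldsymbol{b}_{l,k}(\boldsymbol{e}_q^s)}^2$. The approach is to note that $\boldsymbol{F}_{\mathrm{zp}}\boldsymbol{e}_q^s$ is a single unit entry in the zero-padded frame. The basis $\boldsymbol{b}_{l,k}$ shifts it by $l$ and multiplies it by a unit-modulus phase, so each column has exactly one unit entry, provided the shifted index stays inside $\{0,\dots,N_r-1\}$. Here I must check that the ZP length $M_0=l_{\max}$ and observation length $N_r=MN$ guarantee that no shift falls off the end. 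Indeed, the last data sample sits at index $(N-1)M+M_d-1$, and adding $l_{\max}=M_0$ gives $MN-1$, so the shift stays in range. Hence each of the $N_s$ values of $q$ contributes $N_h$, the trace equals $N_hN_s$, and the result is \eqref{eq:sigma2v_final}.

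A subtlety to flag is the justification for treating the posterior of $\boldsymbol{h}$ in \eqref{eq:Ch} via the OAMP NLE message (mean $\tilde{\boldsymbol{h}}_\alpha$, variance $\chi_h^2$) and for the independence of the errors. I would argue both from the error orthogonality of OAMP, as invoked just before the proposition.
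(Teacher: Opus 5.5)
Your proposal is correct and takes essentially the same route as the paper's proof. You substitute the true model, expand $\boldsymbol{B}(\Delta\boldsymbol{s})=\sum_q\Delta s[q]\boldsymbol{B}_q$ by linearity, treat the data error as white with variance $\chi_s^2$ and $\boldsymbol{h}\sim\mathcal{CN}(\tilde{\boldsymbol{h}}_\alpha,\chi_h^2\boldsymbol{I})$, and use $\tr(\boldsymbol{B}_q\boldsymbol{B}_q^\herm)=N_h$. Your explicit check that the zero padding $M_0=l_{\max}$ keeps every shifted unit entry inside the $N_r$-length window justifies that last identity, which the paper only asserts from the construction of $\boldsymbol{B}(\boldsymbol{s})$.
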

\begin{IEEEproof}
    The proof is given in Appendix \ref{app:sigma2v}.
\end{IEEEproof}

\begin{algorithm}[t]
    \caption{OAMP-JED}
    \label{alg:oamp_jed}
    \begin{algorithmic}[1]
        \Input $\boldsymbol{r},\boldsymbol{x}_\mathrm{p},P,\sigma_w^2,N_\text{ite}$
        \Output $\hat{\eta}_s\bigl(\boldsymbol{F}_{\!\mathrm{mod}}^\herm\tilde{\boldsymbol{s}}_{\beta,\mathrm{d}}\bigr)$
        \Initialize $\tilde{\boldsymbol{h}}_\alpha{\gets}\boldsymbol{0},\ \tilde{\boldsymbol{x}}_\alpha{\gets}\boldsymbol{x}_\mathrm{p},\ \chi_s^{2}{\gets}E_d,\ \chi_h^{2}{\gets}\frac{1}{N_{\!h}}$
        \For{$\ell=0,\ldots,N_\text{ite}-1$}
            \CmtLine{Module A: channel estimation}
            \State Construct $\boldsymbol{B}$ from $\tilde{\boldsymbol{s}}_\alpha=\boldsymbol{F}_{\!\mathrm{mod}}\tilde{\boldsymbol{x}}_\alpha$ using \eqref{eq:B}.
            \State Update $\sigma_v^2$ by \eqref{eq:sigma2v_final}.
            \State LE for channel: Update $\tilde{\boldsymbol{h}}_\beta$ by \eqref{eq:oamp_ce_le} and $\phi_h^2$ by \eqref{eq:phi2_h}.
            \State NLE for channel: Update $\tilde{\boldsymbol{h}}_\alpha$ by \eqref{eq:oamp_ce_nle} and $\chi_h^2$ by \eqref{eq:chi2_h}.
            \CmtLine{Module B: data detection}
            \State Construct $\boldsymbol{G}$ from $\tilde{\boldsymbol{h}}_\alpha$ using $\boldsymbol{G}=\tilde{\boldsymbol{G}}\boldsymbol{F}_{\mathrm{zp}}$ and \eqref{eq:G}.
            \State Update $\sigma_u^2$ by \eqref{eq:sigma2u_final}.
            \State LE for data: Update $\tilde{\boldsymbol{s}}_\beta$ by \eqref{eq:oamp_dd_le} and $\phi_s^2$ by \eqref{eq:phi2_s}.
            \State Pilot removal: $\tilde{\boldsymbol{s}}_{\beta,\mathrm{d}}\gets\tilde{\boldsymbol{s}}_{\beta}-\boldsymbol{F}_{\!\mathrm{mod}}\boldsymbol{x}_\mathrm{p}$.
            \State NLE for data: Update $\tilde{\boldsymbol{x}}_{\alpha,\mathrm{d}}$ by \eqref{eq:oamp_dd_nle} and $\chi_s^2$ by \eqref{eq:chi2_s}.
            \State Pilot addition: $\tilde{\boldsymbol{x}}_\alpha\gets\tilde{\boldsymbol{x}}_{\alpha,\mathrm{d}}+\boldsymbol{x}_\mathrm{p}$.
        \EndFor
    \end{algorithmic}
\end{algorithm}

Next, we derive the variational noise term $\boldsymbol{u}$ in \eqref{eq:rGsu} for Module~B. Following the same derivation, we perform the white noise projection $\boldsymbol{\Sigma}_u=\sigma_u^2\boldsymbol{I}$ and establish the following proposition.
\begin{proposition}\label{prop:sigma2u}
    When the channel matrix $\boldsymbol{G}(\tilde{\boldsymbol{h}}_\alpha)$ is approximated by the NLE message $\tilde{\boldsymbol{h}}_\alpha$ of the OAMP channel estimator, the variance of the effective white noise $\boldsymbol{u}$ in \eqref{eq:rGsu} is given by
    \begin{align}\label{eq:sigma2u_final}
        \sigma_{u}^2=\sigma_w^2+\frac{\chi_h^2N_h}{N_r}\left(\chi_s^2N_s + \tilde{\boldsymbol{s}}_\alpha^\herm\tilde{\boldsymbol{s}}_\alpha\right).
    \end{align}
\end{proposition}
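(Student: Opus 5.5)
The plan is to mirror the derivation that led to \eqref{eq:sigma2v}, with the roles of $\boldsymbol{h}$ and $\boldsymbol{s}$ exchanged. Treating $\boldsymbol{s}$ as the latent variable and imposing $\boldsymbol{\Sigma}_u=\sigma_u^2\boldsymbol{I}$, the same EM-type maximization of the expected complete-data log-likelihood gives
\begin{align*}
\sigma_u^2=\frac{1}{N_r}\tr(\boldsymbol{C}_s),\qquad \boldsymbol{C}_s=\E\bigl[(\boldsymbol{r}-\boldsymbol{G}\boldsymbol{s})(\boldsymbol{r}-\boldsymbol{G}\boldsymbol{s})^\herm\bigr],
\end{align*}
where $\boldsymbol{G}=\boldsymbol{G}(\tilde{\boldsymbol{h}}_\alpha)$. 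The expectation is taken over the variational (OAMP) distributions of both unknowns:
\begin{itemize}
\item $\boldsymbol{h}=\tilde{\boldsymbol{h}}_\alpha+\Delta\boldsymbol{h}$, with $\Delta\boldsymbol{h}$ zero-mean, i.i.d. entries of variance $\chi_h^2$;
\item $\boldsymbol{s}=\tilde{\boldsymbol{s}}_\alpha+\Delta\boldsymbol{s}$, with $\Delta\boldsymbol{s}$ zero-mean, i.i.d. entries of variance $\chi_s^2$.
\end{itemize}
The second model follows from $\tilde{\boldsymbol{x}}_\alpha$ having per-entry error $\chi_s^2$ together with the unitarity of $\boldsymbol{F}_{\!\mathrm{mod}}$. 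I will take $\Delta\boldsymbol{h}$, $\Delta\boldsymbol{s}$ and $\boldsymbol{w}$ to be mutually independent. This is justified by OAMP error orthogonality and by the fact that they come from different modules; it is the same modeling step used for \eqref{eq:sigma2v_final}.

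Next I substitute the true model $\boldsymbol{r}=\boldsymbol{G}(\boldsymbol{h})\boldsymbol{s}+\boldsymbol{w}$ from \eqref{eq:io_time_mat}. Since $\boldsymbol{G}(\cdot)$ is linear in its channel argument, the residual becomes
\begin{align*}
\boldsymbol{r}-\boldsymbol{G}(\tilde{\boldsymbol{h}}_\alpha)\boldsymbol{s}=\boldsymbol{G}(\Delta\boldsymbol{h})\boldsymbol{s}+\boldsymbol{w}=\boldsymbol{B}(\boldsymbol{s})\Delta\boldsymbol{h}+\boldsymbol{w},
\end{align*}
where the second equality is the reparameterization \eqref{eq:rBh}. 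The cross terms vanish because $\boldsymbol{w}$ and $\Delta\boldsymbol{h}$ are zero-mean and independent. Hence
\begin{align*}
\tr(\boldsymbol{C}_s)=N_r\sigma_w^2+\chi_h^2\,\E_{\boldsymbol{s}}\bigl[\tr\bigl(\boldsymbol{B}(\boldsymbol{s})^\herm\boldsymbol{B}(\boldsymbol{s})\bigr)\bigr]=N_r\sigma_w^2+\chi_h^2\sum_{(l,k)}\E\norm{\boldsymbol{b}_{l,k}(\boldsymbol{s})}^2 .
\end{align*}

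The key structural step is to show that $\norm{\boldsymbol{b}_{l,k}(\boldsymbol{s})}^2=\norm{\boldsymbol{s}}^2$ for every $(l,k)\in\mathcal{L}\times\mathcal{K}$. The entries of $\boldsymbol{b}_{l,k}(\boldsymbol{s})$ are $\boldsymbol{s}_{\mathrm{zp}}$ delayed by $l$ and multiplied by unit-modulus phases. Because $M_0=l_{\max}\ge l$, the trailing $M_0$ zeros of the last block absorb the shift, so no nonzero sample falls beyond index $N_r-1$. Moreover $\norm{\boldsymbol{s}_{\mathrm{zp}}}=\norm{\boldsymbol{s}}$, since $\boldsymbol{F}_{\mathrm{zp}}$ only inserts zeros. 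Summing over the $N_h$ taps therefore gives $N_h\,\E\norm{\boldsymbol{s}}^2$. Finally,
\begin{align*}
\E\norm{\boldsymbol{s}}^2=\tilde{\boldsymbol{s}}_\alpha^\herm\tilde{\boldsymbol{s}}_\alpha+N_s\chi_s^2,
\end{align*}
because $\Delta\boldsymbol{s}$ is zero-mean. Dividing by $N_r$ yields \eqref{eq:sigma2u_final}.

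I expect the main obstacle to be justifying the factorized variational model rather than the algebra. In particular, the step $\E_{\boldsymbol{s},\Delta\boldsymbol{h}}\bigl[\norm{\boldsymbol{B}(\boldsymbol{s})\Delta\boldsymbol{h}}^2\bigr]=\chi_h^2\,\E\bigl[\tr(\boldsymbol{B}^\herm\boldsymbol{B})\bigr]$ needs $\Delta\boldsymbol{h}$ to be white and independent of $\boldsymbol{s}$. I will state this explicitly as the OAMP-based approximation, in the same way as in the proof of Proposition~\ref{prop:sigma2v}. The remaining care is in verifying that the ZP guard prevents truncation at the frame boundary, which is what makes the column norms exactly equal to $\norm{\boldsymbol{s}}^2$.
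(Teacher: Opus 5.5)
Your proposal is correct and is essentially the paper's argument: the paper mirrors Appendix~\ref{app:sigma2v} to reach the form with $\tilde{\boldsymbol{s}}_\alpha^\herm\bigl(\sum_i\boldsymbol{G}_i^\herm\boldsymbol{G}_i\bigr)\tilde{\boldsymbol{s}}_\alpha$ and then invokes $\boldsymbol{G}_i^\herm\boldsymbol{G}_i=\boldsymbol{I}$, which is the same fact as your identity $\norm{\boldsymbol{b}_{l,k}(\boldsymbol{s})}^2=\norm{\boldsymbol{s}}^2$, since $\boldsymbol{b}_{l,k}(\boldsymbol{s})=\boldsymbol{G}_i\boldsymbol{s}$. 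Your zero-padding argument supplies the justification for this orthonormality, which the paper only asserts.
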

\begin{IEEEproof}
    The proof is given in Appendix \ref{app:sigma2u}.
\end{IEEEproof}

Intuitively, $\sigma_v^2$ and $\sigma_u^2$ can be viewed as the variances of the effective white noise observed by the OAMP estimators, which is Bayesian-optimal for the variational Gaussian error implied by the current channel and signal estimates \cite{Wang2022OTFS_JED_VBI}. This scalarization also matches OAMP's asymptotic isotropic error \cite{Ma2017OAMP}. Based on the above derivations, we summarize the proposed OAMP-JED algorithm in Algorithm \ref{alg:oamp_jed}.

\subsection{Computational Complexity}\label{sec:complexity}
We summarize the computational complexity of the proposed OAMP-JED receiver for the uncoded case. For coded ODDM, additional decoder-specific complexity is incurred.

The complexities of Module~A and Module~B are dominated by the matrix inversions in \eqref{eq:WHat_B} and \eqref{eq:WHat_G}, respectively. Using the Woodbury identity, \eqref{eq:WHat_B} has complexity $\mathcal{O}(N_rN_h^2)$, which is affordable due to the small $N_h$ of sparse doubly selective channels. A direct computation of \eqref{eq:WHat_G} costs a prohibitive $\mathcal{O}(N_rN_s^2)$ for typical ODDM frame sizes. However, with the ZP structure, $\boldsymbol{G}$ is block diagonal with block size $\C^{M\times M_d}$, so \eqref{eq:WHat_G} can be evaluated blockwise, reducing the complexity to $\mathcal{O}(NM_d^2(M+M_d))$. For the noise inflation, both \eqref{eq:sigma2v_final} and \eqref{eq:sigma2u_final} incur negligible complexity. Specifically, $\sum_q\boldsymbol{B}_q^\herm\boldsymbol{B}_q$ in \eqref{eq:sigma2v_final} is independent of signal and channel realizations, which can be precomputed once per transmission setting and reused. \eqref{eq:sigma2u_final} is a simple scalar computation. Therefore, the overall complexity of OAMP-JED is $\mathcal{O}\bigl(N_{\text{ite}}\bigl(N_rN_h^2+NM_d^2(M+M_d)\bigr)\bigr)$, where $N_{\text{ite}}$ is the number of iterations.

Table~\ref{tab:complexity_compare} compares the uncoded complexity orders of OAMP-JED and two benchmark JED receivers. For MF-OAMP \cite{Wen2024MFOAMP}, $I_{\mathrm{est}}$ and $I_{\mathrm{det}}$ denote the numbers of nested OAMP iterations for channel estimation and data detection, respectively. For SP-I \cite{Mishra2022OTFS_SuperimposedPilots}, $I_{\mathrm{MP}}$ is the number of nested detection iterations. Without nested iterations, OAMP-JED has a complexity order comparable to that of SP-I and lower than that of MF-OAMP.

\begin{table}[t]
    \centering
    \caption{Computational complexity of uncoded JED receivers.}
    \renewcommand{\arraystretch}{1.2}
    \begin{tabular}{|c|c|}
        \hline
        Receiver & Computational complexity \\
        \hline
        Proposed OAMP-JED &
        $\mathcal{O}\bigl(N_{\text{ite}}\bigl(N_rN_h^2+NM_d^2(M+M_d)\bigr)\bigr)$\\
        \hline
        SP-I \cite{Mishra2022OTFS_SuperimposedPilots} &
        $\mathcal{O}\bigl(N_{\text{ite}}\bigl(P^2N_r{+}P^3{+}I_{\mathrm{MP}}N_r P\abs{\mathcal{A}}\bigr)\bigr)$\\
        \hline
        MF-OAMP \cite{Wen2024MFOAMP} &
        $\mathcal{O}\bigl(N_{\text{ite}}\bigl(I_{\mathrm{est}}N_rN_h^2+I_{\mathrm{det}}N_rM_d^2(M{+}M_d)\bigr)\bigr)$\\
        \hline
    \end{tabular}
    \label{tab:complexity_compare}
\end{table}

\section{Numerical Results}\label{sec:numerical}

In this section, we present numerical results for the proposed OAMP-JED receiver in ODDM systems. An ODDM signal is transmitted with $M{=}256$, $N{=}32$, and $T{=}\SI{66.67}{\micro\second}$, using QPSK modulation. The channel follows the TDL-A power delay profile with $P{=}23$ paths \cite{3GPPTR38901R19}, while Doppler shifts are generated according to the Jakes' model. We set the carrier frequency to \SI{5}{\giga\hertz} and the maximum velocity to \SI{360}{\kilo\metre/\hour}. Off-grid delay and Doppler shifts are rounded to the nearest integer. Based on the parameters, we have $l_{\max}{=}10,k_{\max}{=}4$. Therefore, the adopted ZP structure incurs an overhead of $10/256\approx 3.9\%$ in the considered scenario, which is comparable to the CP overhead in 5G NR \cite{3GPPTS38211R19}. For coded scenarios, we employ a $(15744,8448)$ 5G NR LDPC code (base graph 1), decoded by a belief propagation decoder with at most $20$ decoding iterations.

For comparison, we consider two benchmark JED receivers, SP-I \cite{Mishra2022OTFS_SuperimposedPilots} and MF-OAMP \cite{Wen2024MFOAMP}. To quantify the loss due to imperfect CSI, we further include two genie-aided perfect-CSI references: OAMP \cite{Ma2017OAMP} and the parallel interference cancellation with MMSE (PIC-MMSE) detector based on \cite{Li2025OTFS_ParallelCoding}. The iteration limit is set to $10$ for SP-I, OAMP-JED, and OAMP. For MF-OAMP, we use $10$ outer iterations and $5$ inner iterations for the nested OAMP loops. These iteration settings have been numerically verified to be sufficient for all algorithms to converge. The superimposed pilot sequence $\boldsymbol{x}_\mathrm{p}$ for JED algorithms is generated as an i.i.d. Gaussian sequence.\footnote{
    In practice, low-PAPR sequences, such as Zadoff-Chu sequences, can also be adopted as pilots without affecting the proposed OAMP-JED derivation.
}
We define $E_b/N_0$ based on the total bit energy including the power allocated to $\boldsymbol{x}_\mathrm{p}$. 

We first examine the impact of the pilot-to-data energy ratio $\gamma$ on the BER performance of JED algorithms, as shown in Fig. \ref{fig:ber_gamma}. Each JED algorithm exhibits an optimal $\gamma$ that minimizes the BER by balancing the pilot SNR for channel estimation and the data SNR for detection. The proposed OAMP-JED receiver consistently outperforms SP-I and MF-OAMP, with the optimal $\gamma$ around \SI{-12}{\decibel} for both uncoded and coded systems. This optimal $\gamma$ is higher than those of SP-I and MF-OAMP, suggesting that OAMP-JED better exploits the data detection gain. Moreover, OAMP-JED closely approaches the perfect-CSI OAMP benchmark at the optimal $\gamma$, confirming the effectiveness of OAMP-JED. In the following, we adopt the respective optimal $\gamma$ for each JED algorithm. 

\begin{figure}[t]
    \centering
    \subfloat[Uncoded ODDM with $E_b/N_0=\SI{12}{\decibel}$.]{\label{fig:berGAMMA_qpsk_uncoded}\includegraphics[width=0.99\columnwidth,trim={0 0 0 0},clip]{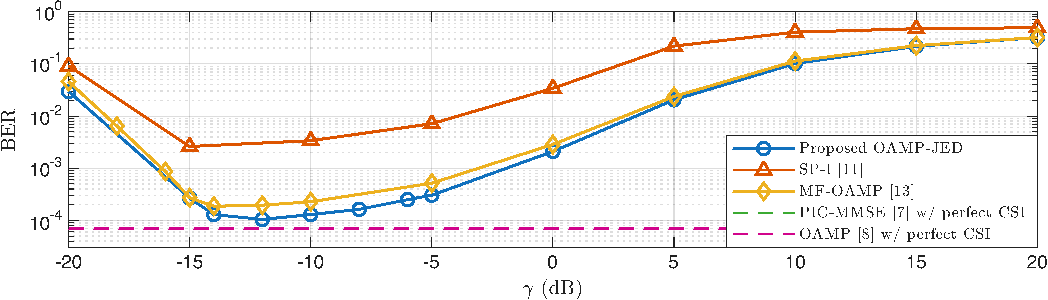}}
    \\
    \centering
    \subfloat[Coded ODDM with $E_b/N_0=\SI{8}{\decibel}$.]{\label{fig:berGAMMA_qpsk_coded}\includegraphics[width=0.99\columnwidth,trim={0 0 0 0},clip]{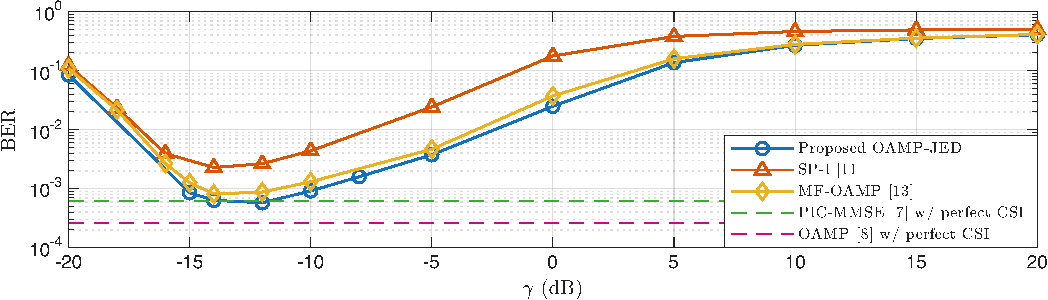}}
    \caption{BER versus $\gamma$ for ODDM.}
    \label{fig:ber_gamma}
\end{figure}

We next evaluate the proposed OAMP-JED receiver for uncoded ODDM, as shown in Fig. \ref{fig:ber_snr_uncoded}. Thanks to OAMP's error orthogonality, OAMP-JED better controls error propagation and achieves a significantly lower BER than SP-I. Although the recently proposed MF-OAMP tracks OAMP-JED at low $E_b/N_0$, it exhibits an error floor at high $E_b/N_0$ due to the suboptimal MF approximation. Moreover, OAMP-JED closely approaches the performance of OAMP and PIC-MMSE with perfect CSI, maintaining a small gap of less than $\SI{0.2}{\decibel}$. This indicates that the gap is mainly due to the reduced data SNR caused by the superimposed pilot.

\begin{figure}[ht]
    \centering
    \subfloat[Uncoded ODDM.]{\label{fig:ber_snr_uncoded}\includegraphics[width=0.48\columnwidth,trim={0 0 0 0},clip]{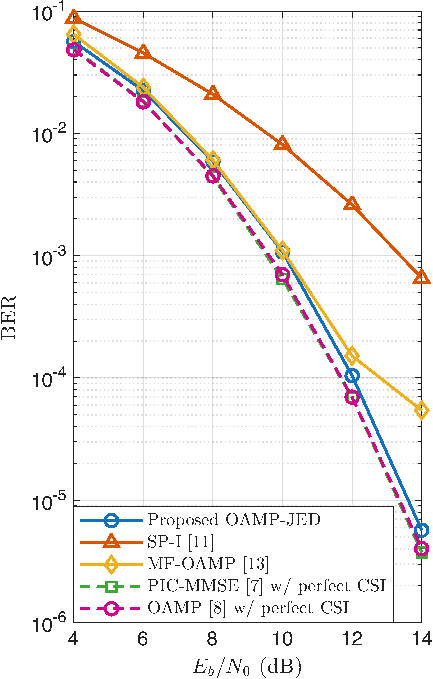}}
    \hfill
    \subfloat[Coded ODDM.]{\label{fig:ber_snr_coded}\includegraphics[width=0.48\columnwidth,trim={0 0 0 0},clip]{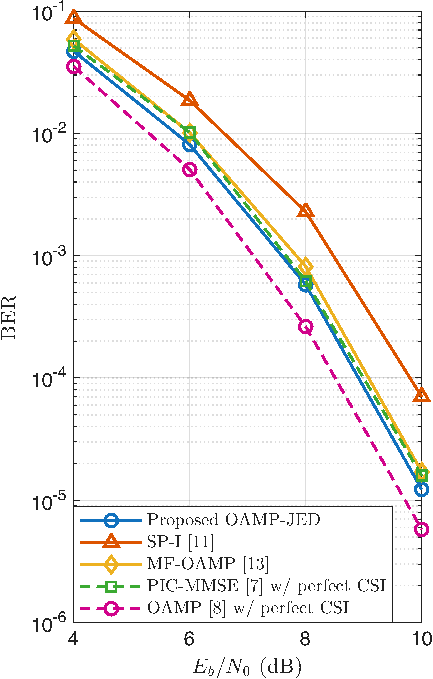}}
    \caption{BER versus $E_b/N_0$ for ODDM. The selected $\gamma$ for (OAMP-JED, SP-I, MF-OAMP) are $(-12,-16,-14)\,\si{\decibel}$ in the uncoded case and $(-12,-14,-14)\,\si{\decibel}$ in the coded case.}
    \label{fig:ber_snr}
\end{figure}

Finally, we evaluate the proposed OAMP-JED receiver for coded ODDM, illustrated in Fig. \ref{fig:ber_snr_coded}. Among the benchmark JED algorithms, OAMP-JED still achieves the lowest BER. Although MF-OAMP now approaches OAMP-JED across the simulated $E_b/N_0$ range, it incurs a significantly higher computational complexity due to its nested OAMP loops. OAMP-JED also closely approaches OAMP under perfect CSI, with the gap slightly enlarged to about $\SI{0.4}{\decibel}$, since the reduced data SNR translates to a larger BER degradation after decoding. Interestingly, with channel coding, OAMP-JED outperforms PIC-MMSE with perfect CSI, which we attribute to the enhanced error orthogonality provided by the OAMP principle \cite{Ma2017OAMP}, a property favored by the decoder.

\section{Conclusion}

In this paper, we proposed an OAMP-JED receiver for ODDM over doubly selective channels. By leveraging the OAMP framework, the proposed receiver efficiently performs joint channel estimation and data detection with a scalar noise variance update to account for model uncertainty. OAMP's error orthogonality enables principled soft information exchange across iterations with reduced error propagation. Numerical results demonstrate that the proposed OAMP-JED receiver outperforms benchmark JED algorithms and closely approaches the performance of OAMP with perfect CSI.

\appendices

\section{Proof of Proposition \ref{prop:sigma2v}}\label{app:sigma2v}

We construct the approximate signal matrix $\boldsymbol{B}(\tilde{\boldsymbol{s}}_\alpha)$ using the NLE message from the OAMP data detector. Substituting it into \eqref{eq:rBh} leads to
\begin{align}\label{eq:r-Bh}
    \boldsymbol{r}-\boldsymbol{B}(\tilde{\boldsymbol{s}}_\alpha)\boldsymbol{h}=\boldsymbol{w}-\boldsymbol{B}(\Delta\boldsymbol{s})\boldsymbol{h},
\end{align}
with $\Delta\boldsymbol{s}=\tilde{\boldsymbol{s}}_\alpha-\boldsymbol{s}$. Combining with \eqref{eq:Ch} yields
\begin{align}
    \boldsymbol{C}_h=\cov(\boldsymbol{B}(\Delta\boldsymbol{s})\boldsymbol{h})+\sigma_w^2\boldsymbol{I},
    \label{eq:Ch2}
\end{align}
where $\cov(\boldsymbol{B}(\Delta\boldsymbol{s})\boldsymbol{h})=\E\left[\boldsymbol{B}(\Delta\boldsymbol{s})\E\left[\boldsymbol{h}\boldsymbol{h}^\herm\right]\boldsymbol{B}^\herm(\Delta\boldsymbol{s})\right]$. Linearity implies $\boldsymbol{B}(\Delta\boldsymbol{s})=\sum_{q}\Delta s[q]\boldsymbol{B}_q$, yielding
\begin{align}\nonumber
    \cov(\boldsymbol{B}(\Delta\boldsymbol{s})\boldsymbol{h})=\sum_{q_1,q_2}\E[\Delta s[q_1]\Delta s^*[q_2]]\boldsymbol{B}_{q_1}\E\left[\boldsymbol{h}\boldsymbol{h}^\herm\right]\boldsymbol{B}_{q_2}^\herm.
\end{align}
Given OAMP's divergence-free NLE, $\Delta\boldsymbol{s}$ behaves like white noise with scalar variance $\chi_s^2$. Therefore, we have
\begin{align}
    \cov(\boldsymbol{B}(\Delta\boldsymbol{s})\boldsymbol{h})=\chi_s^2\sum_{q}\boldsymbol{B}_q\E\left[\boldsymbol{h}\boldsymbol{h}^\herm\right]\boldsymbol{B}_q^\herm.
\end{align}
Similarly, we use the NLE message from the OAMP channel estimator to construct the surrogate distribution, i.e., $p\left(\boldsymbol{h}\middle|\boldsymbol{r},\boldsymbol{\Sigma}_v^{(\ell)}\right)=\mathcal{CN}\left(\tilde{\boldsymbol{h}}_\alpha,\chi_h^2\boldsymbol{I}\right)$. This yields
\begin{align}\label{eq:CBh}
    \cov(\boldsymbol{B}(\Delta\boldsymbol{s})\boldsymbol{h}){=}\chi_s^2\sum_{q}\left(\chi_h^2\boldsymbol{B}_q\boldsymbol{B}_q^\herm{+}\boldsymbol{B}_q\tilde{\boldsymbol{h}}_\alpha\tilde{\boldsymbol{h}}_\alpha^\herm\boldsymbol{B}_q^\herm\right).
\end{align}
Substituting \eqref{eq:Ch2} and \eqref{eq:CBh} back to \eqref{eq:sigma2v} gives
\begin{align}
    \sigma_v^2=\sigma_w^2{+}\frac{\chi_s^2}{N_r}\left(\!\chi_h^2\sum_q\tr\!\left(\!\boldsymbol{B}_q\boldsymbol{B}_q^\herm\!\right){+}\tilde{\boldsymbol{h}}_\alpha^\herm\left(\!\sum_q\boldsymbol{B}_q^\herm\boldsymbol{B}_q\!\right)\tilde{\boldsymbol{h}}_\alpha\!\right).
    \nonumber
\end{align}
According to the construction of $\boldsymbol{B}(\boldsymbol{s})$ in \eqref{eq:B}, we have $\tr\left(\boldsymbol{B}_q\boldsymbol{B}_q^\herm\right)=N_h$ for $q=0,\dots,N_s-1$. Thus, we arrive at \eqref{eq:sigma2v_final}, which completes the proof.

\section{Proof of Proposition \ref{prop:sigma2u}}\label{app:sigma2u}
Following the same derivation as in Appendix \ref{app:sigma2v}, we reach
\begin{align}\label{eq:sigma2u}
    \sigma_{u}^2=\sigma_w^2+\frac{\chi_h^2}{N_r}\left(\chi_s^2N_hN_s + \tilde{\boldsymbol{s}}_\alpha^\herm\left(\sum_i\boldsymbol{G}_i^\herm\boldsymbol{G}_i\right)\tilde{\boldsymbol{s}}_\alpha\right),
\end{align}
where $\boldsymbol{G}_i=\boldsymbol{G}\left(\boldsymbol{e}_i^h\right)$ and $\boldsymbol{e}_i^h\in\C^{N_h}$ is the one-hot vector with its $i$-th element being $1$ and $0$ otherwise. Moreover, $\boldsymbol{G}_i$ has orthonormal columns, so $\boldsymbol{G}_i^\herm\boldsymbol{G}_i=\boldsymbol{I}$. Therefore, \eqref{eq:sigma2u} can be simplified to \eqref{eq:sigma2u_final}, which completes the proof.


\ifCLASSOPTIONcaptionsoff
  \newpage
\fi

\bibliographystyle{IEEEtran}
\bibliography{references}

\end{document}